\newcommand{\keywords}[1]{\par\addvspace\baselineskip
\noindent\keywordname\enspace\ignorespaces#1}
\newcommand{\text@hyphens}{\mathcode`\-=`\-\relax}
\newcommand{\id}[1]{\ensuremath{\mathit{\text@hyphens#1}}}
\begin{document}

\mainmatter  % start of an individual contribution

% first the title is needed
\title{Upper Bounds on Number of Steals in Rooted Trees}

% a short form should be given in case it is too long for the running head
\titlerunning{Upper Bounds on Number of Steals in Rooted Trees}

% the name(s) of the author(s) follow(s) next
%
% NB: Chinese authors should write their first names(s) in front of
% their surnames. This ensures that the names appear correctly in
% the running heads and the author index.
%
\author{Charles E. Leiserson\inst{1}\and Tao B. Schardl\inst{1}\and Warut Suksompong\inst{2}%
%\thanks{Supported by the Stanford Graduate Fellowship.}%
%\and Ursula Barth\and Ingrid Haas\and Frank Holzwarth\and\\
%Anna Kramer\and Leonie Kunz\and Christine Rei\ss\and\\
%Nicole Sator\and Erika Siebert-Cole\and Peter Stra\ss er
}
\authorrunning{C. E. Leiserson, T. B. Schardl, W. Suksompong}
% (feature abused for this document to repeat the title also on left hand pages)

% the affiliations are given next; don't give your e-mail address
% unless you accept that it will be published
\institute{MIT Computer Science and Artificial Intelligence Laboratory\\
32 Vassar St, Cambridge, MA 02139, USA\\
\email{\{cel,neboat\}@mit.edu}
\and Department of Computer Science, Stanford University\\
353 Serra Mall, Stanford, CA 94305, USA\\
\email{warut@cs.stanford.edu}
}
%\mail}
%\mailsa\\
%\mailsb\\
%\mailsc\\
%\url{http://www.springer.com/lncs}}

%
% NB: a more complex sample for affiliations and the mapping to the
% corresponding authors can be found in the file "llncs.dem"
% (search for the string "\mainmatter" where a contribution starts).
% "llncs.dem" accompanies the document class "llncs.cls".
%

%\toctitle{Lecture Notes in Computer Science}
%\tocauthor{Authors' Instructions}
\maketitle

\begin{abstract}
Inspired by applications in parallel computing, we analyze the setting of work stealing in multithreaded computations. We obtain tight upper bounds on the number of steals when the computation can be modeled by rooted trees. In particular, we show that if the computation with $n$ processors starts with one processor having a complete $k$-ary tree of height $h$ (and the remaining $n-1$ processors having nothing), the maximum possible number of steals is $\sum_{i=1}^n(k-1)^i\binom{h}{i}$.
\keywords{work stealing, parallel algorithm, extremal combinatorics, binomial coefficient}
\end{abstract}

\section{Introduction}

The two main scheduling paradigms that are commonly used for scheduling multithreaded computations are \textit{work sharing} and \textit{work stealing}. The two paradigms differ in how the threads are distributed. In work sharing, the scheduler migrates new threads to other processors so that underutilized processors have more work to do. In work stealing, on the other hand, underutilized processors attempt to ``steal'' threads from other processors. Intuitively, the migration of threads occurs more often in work sharing than in work stealing, since no thread migration occurs in work stealing when all processors have work to do, whereas a work-sharing scheduler always migrates threads regardless of the current utilization of the processors.

The idea of work stealing has been around at least as far back as the 1980s. Burton and Sleep's work \cite{BurtonSl81} on functional programs on a virtual tree of processors and Halstead's work \cite{Halstead84} on the implementation of Multilisp are among the first to outline the idea of work stealing. These authors point out the heuristic benefits of the work-stealing paradigm with regard to space and communication. Since then, several other researchers \cite{AgrawalHeLe06,AroraBlPl98,DinanLaSaKrNi09,KarpZh93,SuksompongLeSc16} have found applications of the work-stealing paradigm or analyzed the paradigm in new ways.

An important contribution to the literature of work stealing was made by Blumofe and Leiserson \cite{BlumofeLe99}, who gave the first provably good work-stealing scheduler for multithreaded computations with dependencies. Their scheduler executes a fully strict (i.e., well-structured) multithreaded computations on $P$ processors within an expected time of $T_1/P+O(T_\infty)$, where $T_1$ is the minimum serial execution time of the multithreaded computation (the \textit{work} of the computation) and $T_\infty$ is the minimum execution time with an infinite number of processors (the \textit{span} of the computation.) Furthermore, the scheduler has provably good bounds on total space and total communication in any execution.

This paper analyzes upper bounds on the number of steals in multithreaded computations. While the existing literature has dealt extensively with probabilistic and average-case analysis, it has not covered worst-case analysis on work stealing. We obtain tight upper bounds on the number of steals when the computation can be modeled by rooted trees. In particular, we show that if the computation with $n$ processors starts with one processor having a complete $k$-ary tree of height $h$ (and the remaining $n-1$ processors having nothing), the maximum possible number of steals is $\sum_{i=1}^n(k-1)^i\binom{h}{i}$.

The remainder of this paper is organized as follows. Section \ref{sec:setting} introduces the setting that we will analyze throughout the paper. Section \ref{sec:notation} introduces some notation that we will use later in the paper. Section \ref{sec:onestartingproc} analyzes the case where at the beginning all the work is owned by one processor, and the computation tree is a binary tree. Section \ref{sec:karytrees} generalizes Section \ref{sec:onestartingproc} to the case where the computation tree is an arbitrary rooted tree. Section \ref{sec:manystartingprocs} generalizes Section \ref{sec:karytrees} one step further and analyzes the case where at the beginning all processors could possibly own work. Finally, Section \ref{sec:conclusion} concludes and suggests directions for future work.

\section{Setting}
\label{sec:setting}
This section introduces the setting that we will analyze throughout the paper. We describe the setting while keeping the mathematical core as our focus. Suksompong \cite{Suksompong14} provides more detail on the connection between this setting and the work-stealing paradigm.

Suppose that there are $P$ processors. Each processor owns some work, which we model as a computation tree. Throughout this paper, we assume that computation trees are rooted trees with no ``singleton nodes'' (i.e., nodes that have exactly one child.) At any point in the execution of work stealing, a processor is allowed to perform the following two-step operation:

\begin{enumerate}
\item Finish its own work, i.e., destroy its computation tree.
\item Pick another processor that currently owns a computation tree with more than one node, and ``steal'' from that processor.
\end{enumerate}

We now describe how a steal proceeds. Suppose that processor $P_2$ steals from processor $P_1$, and assume that the root node of $P_1$ has $m$ children. If $m>2$, then $P_2$ steals the rightmost subtree, leaving the root node and the other $m-1$ subtrees of $P_1$ intact. If $m=2$, on the other hand, then $P_2$ steals the right subtree, leaving $P_1$ with the left subtree, while the root node disappears. 

An example of a steal is shown in Figure \ref{fig:karyexample}. In this example, the root node has four children, and therefore the steal takes away the rightmost subtree and leaves the remaining three subtrees intact. This definition of stealing in rooted trees is not arbitrary. It is, in fact, the one commonly used in parallel computing.

\begin{figure}
\centering
\begin{tabular}{@{}c@{}}
\begin{tikzpicture}[scale=1]

  \path (0.5,2) coordinate (P1) node[left=0.05cm] {2};
  \path (2,2) coordinate (P2) node[right=0.05cm] {3};
  \path (3.5,2) coordinate (P3) node[right=0.05cm] {4};
  \path (5,2) coordinate (P4) node[right=0.05cm] {5};
  \path (2.75,3) coordinate (P5) node[right=0.09cm] {1};
  \path (4.5,1) coordinate (P6) node[right=0.02cm] {11};
  \path (5.5,1) coordinate (P7) node[right=0.02cm] {12};
  \path (5,0) coordinate (P8) node[right=0.02cm] {13};
  \path (6,0) coordinate (P9) node[right=0.02cm] {14};
  \path (1.5,1) coordinate (P10) node[right=0.02cm] {6};
  \path (2,1) coordinate (P11) node[right=0.02cm] {7};
  \path (2.5,1) coordinate (P12) node[right=0.02cm] {8};
  \path (3,1) coordinate (P13) node[right=0.02cm] {9};
  \path (4,1) coordinate (P14) node[left=0.02cm] {10};
  \foreach \i in {1,...,14}
  {
    \fill (P\i) circle (2pt);
  }
  \draw (P1)--(P5) (P2)--(P5) (P3)--(P5) (P4)--(P5) (P4)--(P6) (P4)--(P7) (P7)--(P8) (P7)--(P9) (P2)--(P10) (P2)--(P11) (P2)--(P12) (P3)--(P13) (P3)--(P14);

  \path (8,1) coordinate (Q1) node[right=0.02cm] {19};
  \path (9,1) coordinate (Q2) node[right=0.02cm] {20};
  \path (7.5,2) coordinate (Q3) node[right=0.02cm] {16};
  \path (8.5,2) coordinate (Q4) node[right=0.02cm] {17};
  \path (9.5,2) coordinate (Q5) node[right=0.02cm] {18};
  \path (8.5,3) coordinate (Q6) node[right=0.02cm] {15};
  \foreach \i in {1,...,6}
  {
    \fill (Q\i) circle (2pt);
  }
  \draw (Q1)--(Q4) (Q2)--(Q4) (Q3)--(Q6) (Q4)--(Q6) (Q5)--(Q6);

  \path (2.75,3.25) coordinate node[above=0.1cm] {$P_1$};
  \path (8.5,3.25) coordinate node[above=0.1cm] {$P_2$};

\end{tikzpicture}
\\[\abovecaptionskip]
\small (a) Initial configuration
\end{tabular}

\vspace{5mm}

\begin{tabular}{@{}c@{}}
\begin{tikzpicture}[scale=1]

  \path (0.5,1) coordinate (P1) node[left=0.05cm] {2};
  \path (2,1) coordinate (P2) node[right=0.05cm] {3};
  \path (3.5,1) coordinate (P3) node[right=0.05cm] {4};
  \path (2,2) coordinate (P4) node[right=0.09cm] {1};
  \path (1.5,0) coordinate (P5) node[right=0.02cm] {6};
  \path (2,0) coordinate (P6) node[right=0.02cm] {7};
  \path (2.5,0) coordinate (P7) node[right=0.02cm] {8};
  \path (3,0) coordinate (P8) node[right=0.02cm] {9};
  \path (4,0) coordinate (P9) node[left=0.02cm] {10};
  \foreach \i in {1,...,9}
  {
    \fill (P\i) circle (2pt);
  }
  \draw (P1)--(P4) (P2)--(P4) (P3)--(P4) (P2)--(P5) (P2)--(P6) (P2)--(P7) (P3)--(P8) (P3)--(P9);

  \path (5.5,1) coordinate (Q1) node[right=0.02cm] {11};
  \path (6.5,1) coordinate (Q2) node[right=0.02cm] {12};
  \path (6,2) coordinate (Q3) node[right=0.05cm] {5};
  \path (6,0) coordinate (Q4) node[right=0.02cm] {13};
  \path (7,0) coordinate (Q5) node[right=0.02cm] {14};
  \foreach \i in {1,...,5}
  {
    \fill (Q\i) circle (2pt);
  }
  \draw (Q1)--(Q3) (Q2)--(Q3) (Q2)--(Q4) (Q2)--(Q5);

  \path (2,2.25) coordinate node[above=0.1cm] {$P_1$};
  \path (6,2.25) coordinate node[above=0.1cm] {$P_2$};

\end{tikzpicture} 
\\[\abovecaptionskip]
\small (b) Processor $P_2$ steals from processor $P_1$
\end{tabular}
\caption{Example of a steal} \label{fig:karyexample}

\end{figure}

The question that we analyze in this paper is as follows: Given any starting configuration of computation trees corresponding to the processors, what is the maximum possible number of steals that can occur in an execution of work stealing?

Before we begin our analysis on the number of steals, we consider an example of a complete execution of work stealing in Figure \ref{fig:worstcaseexample}. There are two processors in this example, $P_1$ and $P_2$, and the initial computation trees corresponding to the two processors are shown in Figure \ref{fig:worstcaseexample}(a). Three steals are performed, yielding the trees in Figures \ref{fig:worstcaseexample}(b), \ref{fig:worstcaseexample}(c), and \ref{fig:worstcaseexample}(d) respectively. In this example, one can check that the maximum number of steals that can be performed in the execution is also 3. 
\begin{figure}
\centering

\begin{tabular}{@{}c@{}}
\begin{tikzpicture}[scale=1]

  \path (0,1) coordinate (P1) node[right=0.02cm] {4};
  \path (1,1) coordinate (P2) node[right=0.02cm] {5};
  \path (2,1) coordinate (P3) node[right=0.02cm] {6};
  \path (3,1) coordinate (P4) node[right=0.02cm] {7};
  \path (0.5,2) coordinate (P5) node[right=0.02cm] {2};
  \path (2.5,2) coordinate (P6) node[right=0.02cm] {3};
  \path (1.5,3) coordinate (P7) node[right=0.02cm] {1};
  \path (-0.5,0) coordinate (P8) node[right=0.02cm] {8};
  \path (0.5,0) coordinate (P9) node[right=0.02cm] {9};
  \foreach \i in {1,...,9}
  {
    \fill (P\i) circle (2pt);
  }
  \draw (P1)--(P5) (P2)--(P5) (P3)--(P6) (P4)--(P6) (P5)--(P7) (P6)--(P7) (P1)--(P8) (P1)--(P9);

  \path (5,1) coordinate (Q1) node[right=0.02cm] {13};
  \path (6,1) coordinate (Q2) node[right=0.02cm] {14};
  \path (7,1) coordinate (Q3) node[right=0.02cm] {15};
  \path (8,1) coordinate (Q4) node[right=0.02cm] {16};
  \path (5.5,2) coordinate (Q5) node[right=0.02cm] {11};
  \path (7.5,2) coordinate (Q6) node[right=0.02cm] {12};
  \path (6.5,3) coordinate (Q7) node[right=0.02cm] {10};
  \foreach \i in {1,...,7}
  {
    \fill (Q\i) circle (2pt);
  }
  \draw (Q1)--(Q5) (Q2)--(Q5) (Q3)--(Q6) (Q4)--(Q6) (Q5)--(Q7) (Q6)--(Q7);

  \path (1.5,3.25) coordinate node[above=0.1cm] {$P_1$};
  \path (6.5,3.25) coordinate node[above=0.1cm] {$P_2$};

\end{tikzpicture}
\\[\abovecaptionskip]
\small (a) Initial configuration
\end{tabular}

\vspace{5mm}

\begin{tabular}{@{}c@{}}
\begin{tikzpicture}[scale=1]

  \path (0,1) coordinate (P1) node[right=0.02cm] {4};
  \path (1,1) coordinate (P2) node[right=0.02cm] {5};
  \path (0.5,2) coordinate (P3) node[right=0.02cm] {2};
  \path (-0.5,0) coordinate (P4) node[right=0.02cm] {8};
  \path (0.5,0) coordinate (P5) node[right=0.02cm] {9};
  \foreach \i in {1,...,5}
  {
    \fill (P\i) circle (2pt);
  }
  \draw (P1)--(P3) (P2)--(P3) (P1)--(P4) (P1)--(P5);

  \path (3,1) coordinate (Q1) node[right=0.02cm] {6};
  \path (4,1) coordinate (Q2) node[right=0.02cm] {7};
  \path (3.5,2) coordinate (Q3) node[right=0.02cm] {3};
  \foreach \i in {1,...,3}
  {
    \fill (Q\i) circle (2pt);
  }
  \draw (Q1)--(Q3) (Q2)--(Q3);

  \path (0.5,2.25) coordinate node[above=0.1cm] {$P_1$};
  \path (3.5,2.25) coordinate node[above=0.1cm] {$P_2$};

\end{tikzpicture}
\\[\abovecaptionskip]
\small (b) Processor $P_2$ steals from processor $P_1$
\end{tabular}

\vspace{5mm}

\begin{tabular}{@{}c@{}}
\begin{tikzpicture}[scale=1]

  \path (0,1) coordinate (P1) node[right=0.02cm] {4};
  \path (-0.5,0) coordinate (P2) node[right=0.02cm] {8};
  \path (0.5,0) coordinate (P3) node[right=0.02cm] {9};
  \foreach \i in {1,...,3}
  {
    \fill (P\i) circle (2pt);
  }
  \draw (P1)--(P2) (P1)--(P3);

  \path (3,1) coordinate (Q1) node[right=0.02cm] {5};
  \foreach \i in {1,...,1}
  {
    \fill (Q\i) circle (2pt);
  }

  \path (0,1.25) coordinate node[above=0.1cm] {$P_1$};
  \path (3,1.25) coordinate node[above=0.1cm] {$P_2$};

\end{tikzpicture}
\\[\abovecaptionskip]
\small (c) Processor $P_2$ steals from processor $P_1$
\end{tabular}

\vspace{5mm}

\begin{tabular}{@{}c@{}}
\begin{tikzpicture}[scale=1]

  \path (0,1) coordinate (P1);
  \foreach \i in {1,...,1}
  {
    \fill (P\i) circle (2pt) node[right=0.02cm] {8};
  }

  \path (2,1) coordinate (Q1);
  \foreach \i in {1,...,1}
  {
    \fill (Q\i) circle (2pt) node[right=0.02cm] {9};
  }

  \path (0,1.25) coordinate node[above=0.1cm] {$P_1$};
  \path (2,1.25) coordinate node[above=0.1cm] {$P_2$};

\end{tikzpicture}
\\[\abovecaptionskip]
\small (d) Processor $P_2$ steals from processor $P_1$
\end{tabular}

\caption{Example of an execution of work stealing} \label{fig:worstcaseexample}

\end{figure}

\section{Notation}
\label{sec:notation}
This section introduces some notation that we will use later in the paper.

\subsection*{Binomial Coefficients}

For positive integers $n$ and $k$, the binomial coefficient $\binom{n}{k}$ is defined as  
\[\dbinom{n}{k}=\begin{cases} \dfrac{n!}{k!(n-k)!} &\mbox{if } n\geq k\\ 
                              0 & \mbox{if } n<k. \end{cases}\] 
The binomial coefficients satisfy Pascal's identity \cite{CormenLeRi09} 
\begin{equation} \label{eq:pascal}
\binom{n}{k}=\binom{n-1}{k}+\binom{n-1}{k-1} 
\end{equation}
for all integers $n,k\geq 1$. One can verify the identity directly using the definition.

\subsection*{Trees}
Let $\id{EMPT}$ denote the empty tree with no nodes, and let $\id{TRIVT}$ denote the trivial tree with only one node.

For $h\geq 0$, let $\id{CBT}(h)$ denote the complete binary tree with height $h$. For instance, the tree $\id{CBT}(4)$ is shown in Figure \ref{fig:CBT4}.

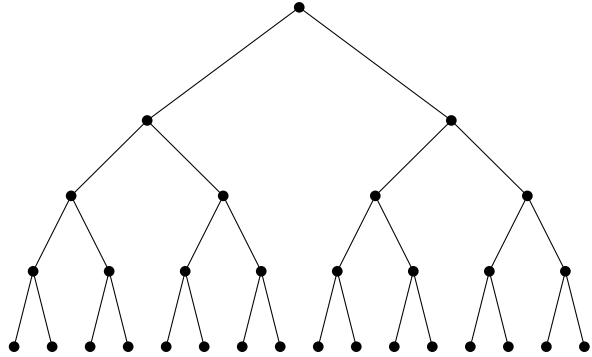
\begin{figure}
\begin{center}
\begin{tikzpicture}[scale=1]

  \foreach \i in {0,...,15}
  {
    \path (\i*0.5,0) coordinate (P\i);
    \fill (P\i) circle (2pt);
  }
  \foreach \i in {16,...,23}
  {
    \path (\i-15.75,1) coordinate (P\i);
    \fill (P\i) circle (2pt);
  }
  \foreach \i in {24,...,27}
  {
    \path (\i*2-47.25,2) coordinate (P\i);
    \fill (P\i) circle (2pt);
  }
  \foreach \i in {28,...,29}
  {
    \path (\i*4-110.25,3) coordinate (P\i);
    \fill (P\i) circle (2pt);
  }
  \foreach \i in {30,...,30}
  {
    \path (\i*8-236.25,4.5) coordinate (P\i);
    \fill (P\i) circle (2pt);
  }
  \draw (P0)--(P16) (P1)--(P16) (P2)--(P17) (P3)--(P17) (P4)--(P18) (P5)--(P18) (P6)--(P19) (P7)--(P19) (P8)--(P20) (P9)--(P20) (P10)--(P21) (P11)--(P21) (P12)--(P22) (P13)--(P22) (P14)--(P23) (P15)--(P23) (P16)--(P24) (P17)--(P24) (P18)--(P25) (P19)--(P25) (P20)--(P26) (P21)--(P26) (P22)--(P27) (P23)--(P27) (P24)--(P28) (P25)--(P28) (P26)--(P29) (P27)--(P29) (P28)--(P30) (P29)--(P30);

\end{tikzpicture}
\caption{The complete binary tree $\id{CBT}(4)$ of height 4} \label{fig:CBT4}
\end{center}
\end{figure}

For $k\geq 2, h\geq 0$, and $1\leq b\leq k-1$, let $\id{ACT}(b,k,h)$ denote the ``almost complete'' $k$-ary tree with $b\cdot k^h$ leaves. In particular, the root has $b$ children if $b\neq 1$ and $k$ children if $b=1$, and every other node has $0$ or $k$ children. For instance, the tree $\id{ACT}(2,3,2)$ is shown in Figure \ref{fig:almostperfectexample}. By definition, we have $\id{CBT}(h)=\id{ACT}(1,2,h)$. Also, the tree $\id{ACT}(b,k,h)$ is complete exactly when $b=1$.

\begin{figure}

\centering
\begin{tikzpicture}[scale=1]

  \path (0.5,1) coordinate (P1);
  \path (2,1) coordinate (P2);
  \path (3.5,1) coordinate (P3);
  \path (5.5,1) coordinate (P4);
  \path (7,1) coordinate (P5);
  \path (8.5,1) coordinate (P6);
  \path (2,2) coordinate (P7);
  \path (7,2) coordinate (P8);
  \path (4.5,3.5) coordinate (P9);
  \path (0,0) coordinate (P10);
  \path (0.5,0) coordinate (P11);
  \path (1,0) coordinate (P12);
  \path (1.5,0) coordinate (P13);
  \path (2,0) coordinate (P14);
  \path (2.5,0) coordinate (P15);
  \path (3,0) coordinate (P16);
  \path (3.5,0) coordinate (P17);
  \path (4,0) coordinate (P18);
  \path (5,0) coordinate (P19);
  \path (5.5,0) coordinate (P20);
  \path (6,0) coordinate (P21);
  \path (6.5,0) coordinate (P22);
  \path (7,0) coordinate (P23);
  \path (7.5,0) coordinate (P24);
  \path (8,0) coordinate (P25);
  \path (8.5,0) coordinate (P26);
  \path (9,0) coordinate (P27);
  \foreach \i in {1,...,27}
  {
    \fill (P\i) circle (2pt);
  }
  \draw (P1)--(P7) (P2)--(P7) (P3)--(P7) (P4)--(P8) (P5)--(P8) (P6)--(P8) (P7)--(P9) (P8)--(P9) (P1)--(P10) (P1)--(P11) (P1)--(P12) (P2)--(P13) (P2)--(P14) (P2)--(P15) (P3)--(P16) (P3)--(P17) (P3)--(P18) (P4)--(P19) (P4)--(P20) (P4)--(P21) (P5)--(P22) (P5)--(P23) (P5)--(P24) (P6)--(P25) (P6)--(P26) (P6)--(P27);

\end{tikzpicture}

\caption{The almost complete ternary tree $\id{ACT}(2,3,2)$ of height $3$ and root branching factor $2$} \label{fig:almostperfectexample}

\end{figure}
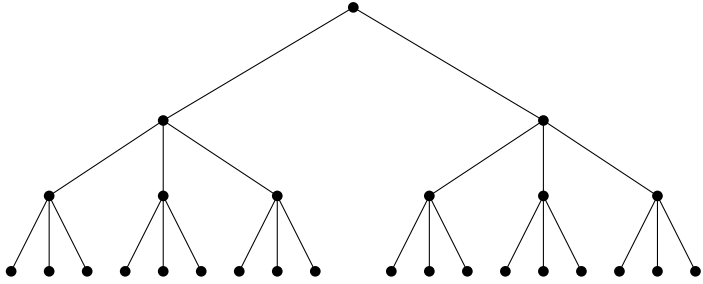

For any tree $T$, denote by $|T|$ the size of the tree $T$, i.e., the number of nodes in $T$.

\section{Binary Tree and One Starting Processor}
\label{sec:onestartingproc}

In this section, we establish the upper bound on the number of steals in the case where at the beginning, all the work is owned by one processor, and the computation tree is a binary tree. We define a potential function to help with establishing the upper bound, and we derive a recurrence relation for the potential function. The recurrence directly yields an algorithm that computes the upper bound in time $O(|T|n)$. Finally, we show that the maximum number of steals that can occur if our configuration starts with the tree $\id{CBT}(h)$ is $\sum_{i=1}^n\binom{h}{i}$.

\subsection*{Recurrence}

Suppose that we are given a configuration in which all processors but one start with an empty tree, while the exceptional processor starts with a tree $T$. How might a sequence of steals proceed? The first steal is fixed---it must split the tree $T$ into its left and right subtrees, $T_l$ and $T_r$. From there, one way to proceed is, in some sense, to be greedy. We obtain as many steals out of $T_l$ as we can, while keeping $T_r$ intact. As such, we have $P-1$ processors that we can use to perform steals on $T_l$, since the last processor must maintain $T_r$. Then, after we are done with $T_l$, we can perform steals on $T_r$ using all of our $P$ processors. This motivates the following definition.

\begin{definition}
Let $n\geq 0$ be an integer and $T$ a binary tree. The $n$th potential of $T$ is defined as the maximum number of steals that can be obtained from a configuration of $n+1$ processors, one of which has the tree $T$ and the remaining $n$ of which have empty trees. The $n$th potential of $T$ is denoted by $\Phi(T,n)$.
\end{definition}

If we only have one processor to work with, we cannot perform any steals, hence $\Phi(T,0)=0$ for any tree $T$. Moreover, the empty tree $\id{EMPT}$ and the trivial tree with a single node $\id{TRIVT}$ cannot generate any steals, hence $\Phi(\id{EMPT},n)=\Phi(\id{TRIVT},n)=0$ for all $n\geq 0$. 

In addition, the discussion leading up to the definition shows that if a binary tree $T$ has left subtree $T_l$ and right subtree $T_r$, then for any $n\geq 1$, we have \[\Phi(T,n)\geq 1+\Phi(T_l,n-1)+\Phi(T_r,n).\] By symmetry, we also have \[\Phi(T,n)\geq 1+\Phi(T_r,n-1)+\Phi(T_l,n).\] Combining the two inequalities yields
\[ \Phi(T,n)\geq 1+\max\{\Phi(T_l,n-1)+\Phi(T_r,n),\Phi(T_r,n-1)+\Phi(T_l,n)\}. \]

In the next theorem, we show that this inequality is in fact always an equality.

\begin{theorem}
\label{thm:worstcaserecurrence}
Let $T$ be a binary tree with at least 2 nodes, and let $T_l$ and $T_r$ be its left and right subtrees. Then for any $n\geq 1$, we have
\[ \Phi(T,n)= 1+\max\{\Phi(T_l,n-1)+\Phi(T_r,n),\Phi(T_r,n-1)+\Phi(T_l,n)\}. \]
\end{theorem}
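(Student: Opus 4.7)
The inequality $\Phi(T,n)\geq 1+\max\{\Phi(T_l,n-1)+\Phi(T_r,n),\Phi(T_r,n-1)+\Phi(T_l,n)\}$ is exactly what the discussion preceding the theorem establishes via its two greedy strategies, so I focus entirely on the matching upper bound. Fix any execution attaining $\Phi(T,n)$ steals. Its very first steal is forced: it splits $T$ into $T_l$ on $P_1$ and $T_r$ on $P_2$, leaving $n-1$ empty processors. Classify every later steal as an \emph{$L$-steal} (its victim holds a subtree of $T_l$) or an \emph{$R$-steal} (its victim holds a subtree of $T_r$), and let $x,y$ denote the counts, so $\Phi(T,n)=1+x+y$.

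The plan is to show $x+y\leq \max\{\Phi(T_l,n-1)+\Phi(T_r,n),\Phi(T_l,n)+\Phi(T_r,n-1)\}$. First observe that the subsequence of $L$-steals in their original order is itself a legal work-stealing sub-execution on $T_l$, since the states of $T_l$-subtrees and of their holders change only through $L$-events; the analogue holds for $R$. With all $n+1$ processors available in each sub-view this gives only the loose bounds $x\leq \Phi(T_l,n)$ and $y\leq \Phi(T_r,n)$. To sharpen them I consider which of $T_l,T_r$ loses its last holder first. Without loss of generality assume $T_l$ dies no later than $T_r$, the other case being symmetric. Writing $\ell_t,r_t,i_t$ for the numbers of processors holding a $T_l$-subtree, a $T_r$-subtree, or nothing (so $\ell_t+r_t+i_t=n+1$), this case assumption forces $r_t\geq 1$ throughout the $L$-sub-execution, and hence $\ell_t\leq n$ there.

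Given this invariant I would promote the $L$-sub-execution into a bona fide work-stealing execution on $T_l$ that uses only $n$ processors. Maintain a dynamic bijection between the current $L$-holders of the original execution and a subset of an $n$-element sim pool: whenever an original $L$-steal has a non-$L$-holder stealer, allocate a fresh idle sim processor to play that role; whenever an intervening event strips an original processor of its $T_l$-subtree (an $R$-steal with an $L$-holder stealer, or a pure destruction), perform a pure destroy on the corresponding sim processor. The invariant $\ell_t\leq n$ keeps the pool sufficient; in the borderline case $\ell_t=n$ one checks that $r_t=1$ and $i_t=0$, so an idle-stealer $L$-step is impossible and an $R$-stealer $L$-step would kill $T_r$ at this very moment, contradicting the case assumption that $T_l$ dies first---leaving only an $L$-stealer $L$-step, which needs no fresh sim processor. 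The simulation therefore runs legitimately on $n$ processors and realizes the same $x$ splits of $T_l$, giving $x\leq \Phi(T_l,n-1)$; combined with $y\leq \Phi(T_r,n)$ this yields the desired inequality, and the symmetric case yields the other summand of the maximum.

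The hard part will be that simulation step. A naive replay that transcribes only the $L$-steals does not fit inside $n$ processors: $R$-events in the original silently free up $L$-holders, and if a naive simulation keeps their $T_l$-subtrees alive, its active count grows without control. The subtlety is to interleave pure destroys into the simulation at exactly the right original times and then to verify, using the $\ell_t\leq n$ invariant together with the restricted form of $L$-steps possible when $\ell_t=n$, that the $n$-processor pool is never exhausted.
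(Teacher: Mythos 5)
Your proof is correct and follows essentially the same route as the paper's: both hinge on the observation that as long as the ``other'' subtree still has a live holder, at most $n$ of the $n+1$ processors can simultaneously hold pieces of the chosen subtree, so its steals embed into a $\Phi(\cdot,n-1)$-execution while the other side trivially fits into a $\Phi(\cdot,n)$-execution. Your case split (which subtree loses its last holder first, rather than which subtree receives the last steal) and your explicit $n$-processor simulation with interleaved destroys are just a more careful rendering of the step the paper compresses into ``we can canonicalize the sequence of steals.''
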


\begin{proof}
Since we have already shown that the left-hand side is no less than the right-hand side, it only remains to show the reverse inequality.

Suppose that we are given any sequence of steals performed on $T$ using $n$ processors. As we have noted before, the first steal is fixed---it must split the tree $T$ into its two subtrees, $T_l$ and $T_r$. Each of the subsequent steals is performed either on a subtree of $T_l$ or a subtree of $T_r$ (not necessarily \textit{the} left or right subtrees of $T_l$ or $T_r$.) Assume for now that the last steal is performed on a subtree of $T_r$. That means that at any particular point in the stealing sequence, subtrees of $T_l$ occupy at most $n-1$ processors. (Subtrees of $T_l$ may have occupied all $n$ processors at different points in the stealing sequence, but that does not matter.) We can canonicalize the sequence of steals in such a way that the steals on subtrees of $T_l$ are performed first using $n-1$ processors, and then the steals on subtrees of $T_r$ are performed using $n$ processors. Therefore, in this case the total number of steals is at most $1+\Phi(T_l,n-1)+\Phi(T_r,n)$.

Similarly, if the last steal is performed on a subtree of $T_l$, then the total number of steals is at most $1+\Phi(T_r,n-1)+\Phi(T_l,n)$. Combining the two cases, we have \[ \Phi(T,n)\leq 1+\max\{\Phi(T_l,n-1)+\Phi(T_r,n),\Phi(T_r,n-1)+\Phi(T_l,n)\}, \] which gives us the desired equality.

\end{proof}

\subsection*{Algorithm}

When combined with the base cases previously discussed, Theorem \ref{thm:worstcaserecurrence} gives us a recurrence that we can use to compute $\Phi(T,n)$ for any binary tree $T$ and any value of $n$. But how fast can we compute the potential? The next corollary addresses that question. Recall from Section \ref{sec:notation} that $|T|$ denotes the size of the tree $T$.

\begin{corollary}
\label{cor:binaryalgo}
Let $T$ be a binary tree. There exists an algorithm that computes the potential $\Phi(T,n)$ in time $O(|T|n)$ and space $O(hn)$, where $h$ denotes the height of $T$.
\end{corollary}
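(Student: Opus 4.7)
The plan is to turn the recurrence from Theorem~\ref{thm:worstcaserecurrence} into a straightforward post-order dynamic program on the tree $T$, with one array of length $n+1$ associated with each subtree. For a subtree $S$ rooted at a node $v$, I would compute the full vector $\bigl(\Phi(S,0),\Phi(S,1),\ldots,\Phi(S,n)\bigr)$. The base cases $\Phi(\id{EMPT},j)=\Phi(\id{TRIVT},j)=0$ and $\Phi(S,0)=0$ are immediate, and the combining step at an internal node $v$ with children subtrees $S_l,S_r$ uses the recurrence from Theorem~\ref{thm:worstcaserecurrence} entry by entry: for $j=1,\ldots,n$, set
\[
\Phi(S,j)=1+\max\bigl\{\Phi(S_l,j-1)+\Phi(S_r,j),\ \Phi(S_r,j-1)+\Phi(S_l,j)\bigr\}.
\]
This requires only the two already-computed vectors for $S_l$ and $S_r$.

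For the time bound, I would observe that producing the vector at an internal node from its childrens' vectors costs $O(n)$ arithmetic operations (one max and two additions per value of $j$), and leaves or trivial/empty subtrees cost $O(n)$ as well. Since the tree has $|T|$ nodes, summing over all nodes gives total running time $O(|T|n)$, matching the claimed bound.

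The main subtlety is the space bound $O(hn)$, since naively storing a length-$(n+1)$ vector at every node would cost $\Theta(|T|n)$. I would handle this by doing the post-order traversal recursively and discarding vectors as soon as they have been consumed by the parent's combining step. Concretely, when processing $v$, I would first recurse into $S_l$, hold its vector while recursing into $S_r$, then combine to obtain $v$'s vector and release the children's vectors. At any moment, the live vectors are exactly one per node on the current root-to-node recursion path (either the ``in-progress'' vector for that node, or the already-finished left-child vector being held while the right child is processed). Since the recursion depth is at most $h$ and each stored vector has size $O(n)$, the total working space is $O(hn)$, with an additional $O(h)$ for the recursion stack, which is absorbed.

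The only step that demands any real care is the space analysis, specifically arguing that the number of simultaneously live length-$(n+1)$ vectors is bounded by the current recursion depth rather than by the size of the tree; the time bound and correctness follow directly from Theorem~\ref{thm:worstcaserecurrence} and the base cases.
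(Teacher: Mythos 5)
Your proposal is correct and follows essentially the same route as the paper: a post-order dynamic program based on the recurrence of Theorem~\ref{thm:worstcaserecurrence}, with $O(1)$ work per subproblem giving $O(|T|n)$ time, and with children's length-$(n+1)$ vectors discarded as soon as the parent's vector is formed, so that at most one stored vector per level of the current root-to-node path remains live, yielding $O(hn)$ space. No gaps; your explicit accounting of which vectors are simultaneously live is, if anything, slightly more detailed than the paper's.
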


\begin{proof}
The algorithm uses dynamic programming to compute $\Phi(T,n)$. For each subtree $T'$ of $T$ and each value $0\leq i\leq n$, the algorithm computes $\Phi(T',i)$ using the recurrence given in Theorem \ref{thm:worstcaserecurrence}. There are $O(|T|n)$ subproblems to solve, and each subproblem takes $O(1)$ time to compute. Hence the running time is $O(|T|n)$.

To optimize space, we can traverse the tree using post-order traversal. Whenever we have computed the values $\Phi(T',i)$ for two siblings, we use those values to compute $\Phi(T',i)$ for the parent and subsequently delete the values for the siblings. One can check that at any stage, if we do not consider the tree whose root node we are currently traversing, then the trees $T'$ for which we store the values $\Phi(T',i)$ have root nodes that are of pairwise different depth from the root node of $T$. Hence we store at most $hn$ values at any stage, and the algorithm takes space $O(hn)$.
\end{proof}

\subsection*{Complete Binary Trees}

An interesting special case is when the initial tree $T$ is a complete binary tree, i.e., a full binary tree in which all leaves have the same depth and every parent has two children. Recall from Section \ref{sec:notation} that $\id{CBT}(h)$ denotes the complete binary tree with height $h$. The next corollary establishes the potential of $\id{CBT}(h)$. Recall also from Section \ref{sec:notation} that $\binom{a}{b}=0$ if $a<b$.

\begin{corollary}
\label{cor:binarytrees}
We have
\begin{align}
\Phi(\id{CBT}(h),n)  &= \binom{h}{1}+\binom{h}{2}+\cdots+\binom{h}{n} \notag \\
                              &= \sum_{i=1}^{n}\binom{h}{i}, \notag 
\end{align}
\end{corollary}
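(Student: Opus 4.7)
The plan is to prove the identity by induction on $h$, with a nested check on $n$, using the recurrence from Theorem \ref{thm:worstcaserecurrence} together with Pascal's identity \eqref{eq:pascal}. The key observation that makes this clean is that $\id{CBT}(h)$ is symmetric: its left and right subtrees are both copies of $\id{CBT}(h-1)$, so the $\max$ in the recurrence collapses, giving
\[ \Phi(\id{CBT}(h),n) = 1 + \Phi(\id{CBT}(h-1),n-1) + \Phi(\id{CBT}(h-1),n). \]
This is structurally identical to Pascal's identity summed over $i$, which is exactly what we need.

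First I would dispense with the base cases. For $h = 0$ we have $\id{CBT}(0) = \id{TRIVT}$, so $\Phi(\id{CBT}(0),n) = 0$, matching the right-hand side since $\binom{0}{i} = 0$ for $i \geq 1$. For $n = 0$ the right-hand side is an empty sum, and $\Phi(T,0) = 0$ for any tree, so the identity holds. Also, when $h \geq 1$ and $n \geq h$, the sum on the right is just $2^h - 1$, which matches the intuition that $n$ processors can extract every steal from $\id{CBT}(h)$; the formula handles this automatically because $\binom{h}{i} = 0$ whenever $i > h$.

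For the inductive step, assume the claim for all smaller values of $h$ (at every $n$). Starting from the right-hand side and applying Pascal's identity term by term,
\[ \sum_{i=1}^{n}\binom{h}{i} = \sum_{i=1}^{n}\binom{h-1}{i} + \sum_{i=1}^{n}\binom{h-1}{i-1}. \]
The first sum is $\Phi(\id{CBT}(h-1),n)$ by the inductive hypothesis. For the second sum, I would reindex by $j = i - 1$ to get $\sum_{j=0}^{n-1}\binom{h-1}{j}$, then split off the $j = 0$ term (which equals $1$) to obtain $1 + \Phi(\id{CBT}(h-1),n-1)$. Combining gives
\[ \sum_{i=1}^{n}\binom{h}{i} = 1 + \Phi(\id{CBT}(h-1),n-1) + \Phi(\id{CBT}(h-1),n), \]
which equals $\Phi(\id{CBT}(h),n)$ by the simplified recurrence above.

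There is no real obstacle here; the only subtle point is handling the $\binom{h-1}{0}$ term correctly when peeling off the reindexed sum, and being careful that the reduction holds for all $n \geq 1$ so that the double induction is well-founded. The case $n = 0$ is the base case, and for $n \geq 1$ the recurrence applies directly because $\id{CBT}(h)$ has at least two nodes whenever $h \geq 1$.
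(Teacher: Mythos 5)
Your proof is correct and follows essentially the same route as the paper: the same base cases ($h=0$ and $n=0$), the same collapse of the $\max$ in Theorem \ref{thm:worstcaserecurrence} via the symmetry of $\id{CBT}(h)$, and the same application of Pascal's identity in the inductive step (you merely run the algebra from the binomial sum toward the recurrence rather than the reverse). No gaps.
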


\begin{proof}
The case $h=0$ holds, since 
\[\Phi(\id{CBT}(0),n)=\displaystyle\sum_{i=1}^{n}\dbinom{0}{i}=0.\] 
The case $n=0$ holds similarly. Now suppose that $h,n>0$. Since the two subtrees of $\id{CBT}(h)$ are symmetric, the recurrence in Theorem \ref{thm:worstcaserecurrence} yields

\[ \Phi(\id{CBT}(h),n) = 1 + \Phi(\id{CBT}(h-1),n-1) + \Phi(\id{CBT}(h-1),n).\]

We proceed by induction. Suppose that the formula for the potential values hold for $\id{CBT}(h-1)$. Using Pascal's identity (Equation \ref{eq:pascal}), we have
\begin{align}
\Phi(\id{CBT}(h),n)  &= 1 + \Phi(\id{CBT}(h-1),n-1) + \Phi(\id{CBT}(h-1),n) \notag \\
                              &= \binom{h-1}{0} + \sum_{i=1}^{n-1}\binom{h-1}{i} + \sum_{i=1}^n\binom{h-1}{i} \notag \\
                              &= \sum_{i=1}^{n}\left(\binom{h-1}{i-1}+\binom{h-1}{i}\right) \notag \\
                              &= \sum_{i=1}^n\binom{h}{i}, \notag
\end{align}
as desired.
\end{proof}

We can write the recurrence
\[ \Phi(\id{CBT}(h),n) = 1 + \Phi(\id{CBT}(h-1),n-1) + \Phi(\id{CBT}(h-1),n)\]
in the form
\[ F(h,n) = F(h-1,n-1) + F(h-1,n),\]
where $F(h,n) = \Phi(\id{CBT}(h),n)+1$. The recurrence equation of $F$ is the same as that of the binomial coefficients. On the other hand, the initial conditions of the two recurrences differ slightly. The initial conditions of the binomial coefficients are $\binom{h}{0}=1$ for all $h\geq 0$ and $\binom{0}{n}=0$ for all $n\geq 1$, while the initial conditions of $F$ are $F(h,0)=F(0,n)=1$ for all $h,n\geq 0$.

For fixed $n$, $\Phi(\id{CBT}(h),n)$ grows as $O(h^n)$. Indeed, one can obtain the (loose) bound 
\[\sum_{i=1}^n\dbinom{h}{i}\leq h^n\]
for $h,n\geq 2$, for example using the simple bound 
\[\dbinom{h}{i}=\dfrac{h(h-1)\cdots(h-i+1)}{i!}\leq h(h-1)^{i-1}\] 
and then computing a geometric sum.

When $n\geq h$, Corollary \ref{cor:binarytrees} implies that $\Phi(\id{CBT}(h),n)=\sum_{i=1}^{n}\binom{h}{i}=2^h-1$. The quantity $2^h-1$ is 1 less than the number of leaves in a complete binary tree of height $h$. Hence, the bound is equivalent to the trivial upper bound that the number of steals is at most the number of tasks in the tree. It is worth noting that the condition $n\geq h$ might hold in many practical applications, and we would be stuck with this upper bound in such situations. For instance, if we process less than 1 billion tasks using at least 30 processors, the condition $n\geq h$ is fulfilled.

We have established the upper bound on the number of steals in the case where at the beginning, all the work is owned by one processor, and the computation tree is a binary tree. In the next sections we will generalize to configurations where the work can be spread out at the beginning and take the form of arbitrary rooted trees as well.

\section{Rooted Tree and One Starting Processor}
\label{sec:karytrees}

In this section, we consider a generalization of Section \ref{sec:onestartingproc} to the configuration in which the starting tree is an arbitrary rooted tree. The key observation is that we can transform an arbitrary rooted tree into a ``left-child right-sibling'' binary tree that is equivalent to the rooted tree with respect to steals. With this transformation, an algorithm that computes the upper bound in time $O(|T|n)$ follows. Finally, we show that the maximum number of steals that can occur if our configuration starts with the tree $\id{ACT}(b,k,h)$ is $\sum_{i=1}^n(k-1)^i\binom{h}{i}+(b-1)\sum_{i=0}^{n-1}(k-1)^i\binom{h}{i}$.

\subsection*{Algorithm}

The key to establishing the upper bound in the case of arbitrary rooted trees is to observe that the problem can in fact be reduced to the case of binary trees, which we have already settled in Section \ref{sec:onestartingproc}. We transform any rooted tree into a binary tree by transforming each node with more than two children into a left-child right-sibling binary tree \cite{CormenLeRi09}. The transformation of one such node with four children is shown in Figure \ref{fig:karytransformation}. 

A node with $k$ children is equivalent with respect to steals to a left-child right-sibling binary tree of height $k-1$. We show this fact for the setting in Figure \ref{fig:karytransformation}; the proof for the general setting is similar. For both trees in Figure \ref{fig:karytransformation}, the first steal takes subtree $T_4$ and leaves the tree containing subtrees $T_1,T_2$, and $T_3$. The second steal takes subtree $T_3$ and leaves the tree containing subtrees $T_1$ and $T_2$. Finally, the third steal takes subtree $T_2$ and leaves subtree $T_1$.

Consequently, we can use the same algorithm as in the case of binary trees to compute the maximum number of successful steals, as the next theorem shows. Recall from Section \ref{sec:notation} that $|T|$ denotes the size of the tree $T$.

\begin{figure}
\centering

\begin{tabular}{@{}c@{}}
\begin{tikzpicture}[scale=1]

  \path (0,2) coordinate (P1);
  \path (1,2) coordinate (P2);
  \path (2,2) coordinate (P3);
  \path (3,2) coordinate (P4);
  \path (1.5,3) coordinate (P5);
  \foreach \i in {5,...,5}
  {
    \fill (P\i) circle (2pt);
  }
  \draw (P1)--(P5) (P2)--(P5) (P3)--(P5) (P4)--(P5);
  \draw (0,1.7) circle(0.3) node {$T_1$}; 
  \draw (1,1.7) circle(0.3) node {$T_2$}; 
  \draw (2,1.7) circle(0.3) node {$T_3$}; 
  \draw (3,1.7) circle(0.3) node {$T_4$}; 

\end{tikzpicture}
\\[\abovecaptionskip]
\small (a) Tree whose root node has 4 children
\end{tabular}

\vspace{5mm}

\begin{tabular}{@{}c@{}}
\begin{tikzpicture}[scale=1]

  \path (1,2) coordinate (Q1);
  \path (2,2) coordinate (Q2);
  \path (1.5,3) coordinate (Q3);
  \path (0.5,1) coordinate (Q4);
  \path (1.5,1) coordinate (Q5);
  \path (0,0) coordinate (Q6);
  \path (1,0) coordinate (Q7);
  \foreach \i in {1,3,4}
  {
    \fill (Q\i) circle (2pt);
  }
  \draw (Q1)--(Q3) (Q2)--(Q3) (Q1)--(Q4) (Q1)--(Q5) (Q4)--(Q6) (Q4)--(Q7);
  \draw (0,-0.3) circle(0.3) node {$T_1$}; 
  \draw (1,-0.3) circle(0.3) node {$T_2$}; 
  \draw (1.5,0.7) circle(0.3) node {$T_3$}; 
  \draw (2,1.7) circle(0.3) node {$T_4$}; 

\end{tikzpicture}
\\[\abovecaptionskip]
\small (b) Tree resulting from transformation of root node
\end{tabular}

\caption{Transformation of root node in rooted tree} \label{fig:karytransformation}

\end{figure}

\begin{theorem}
Let $T$ be a rooted tree. There exists an algorithm that computes the potential $\Phi(T,n)$ in time $O(|T|n)$ and space $O(hkn)$, where $h$ denotes the height of $T$, and $k$ the maximum number of children of any node in $T$.
\end{theorem}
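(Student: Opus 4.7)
The plan is to reduce directly to the binary tree case handled by Corollary~\ref{cor:binaryalgo}. Concretely, I would first construct from $T$ the left-child right-sibling (LCRS) binary tree $T'$ described just before the theorem: the nodes of $T'$ are exactly the nodes of $T$, the left child in $T'$ of a node $v$ is the first child of $v$ in $T$, and the right child in $T'$ of $v$ is the next sibling of $v$ in $T$. In particular, $|T'|=|T|$ because the transformation is a bijection on node sets.

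Second, I would invoke the steal-equivalence observation from the paragraph preceding the theorem to conclude that $\Phi(T,n)=\Phi(T',n)$. That discussion shows that any node of $T$ with $k$ children is equivalent, with respect to steals, to the corresponding depth-$(k-1)$ LCRS chain; applying this equivalence node by node yields a bijection between steal sequences on $T$ and steal sequences on $T'$, so the potentials agree.

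Third, I would run the algorithm of Corollary~\ref{cor:binaryalgo} on $T'$ to compute $\Phi(T',n)=\Phi(T,n)$. Since $|T'|=|T|$, the running time is $O(|T|n)$, matching the claimed bound.

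The step that needs a brief separate argument is the space bound. The space used by Corollary~\ref{cor:binaryalgo} on $T'$ is $O(h'n)$, where $h'$ is the height of $T'$, so I need $h'=O(hk)$. A root-to-leaf path in $T'$ consists of left moves (each descending one level in $T$) interleaved with right moves (each advancing to the next sibling, at the same level in $T$). Any such path corresponds to a root-to-leaf path in $T$ of length at most $h$, contributing at most $h$ left moves, and at each of those at most $h$ levels we can make at most $k-1$ right moves (since a node of $T$ has at most $k$ children); hence $h'\leq hk$. The main subtlety is precisely this height bound, since the LCRS transformation can stretch a shallow $k$-ary tree into a much taller binary tree; once $h'\leq hk$ is in hand, the space bound $O(hkn)$ follows directly from Corollary~\ref{cor:binaryalgo}.
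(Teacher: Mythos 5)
Your overall plan---transform $T$ into a steal-equivalent binary tree $T'$, run the algorithm of Corollary~\ref{cor:binaryalgo} on $T'$, and bound the height of $T'$ by $O(hk)$---is exactly the paper's, and your height argument is in fact more careful than the paper's own (whose proof contains the typo ``height no more than $kn$'' where $hk$ is meant). The problem is your concrete description of the transformation, which is not the one the paper uses and which breaks the key step $\Phi(T,n)=\Phi(T',n)$. In the textbook left-child right-sibling encoding you describe (left child $=$ first child in $T$, right child $=$ next sibling in $T$, so the node set is preserved and $|T'|=|T|$), a node $v$ with children $c_1,\dots,c_m$ becomes a node with the single child $c_1$, and the entire chain $c_2,\dots,c_m$ lives inside the \emph{right} subtree of $c_1$. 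A single binary steal on $T'$ would then remove all of $T_2,\dots,T_m$ at once, which does not correspond to any steal on $T$; moreover $v$ becomes a singleton node, which the paper's model explicitly excludes. So the claimed node-by-node bijection between steal sequences does not exist for this $T'$.

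The transformation the paper actually uses (Figure~\ref{fig:karytransformation}) is different: a node with $m>2$ children is replaced by a left spine of $m-1$ binary nodes whose right children are the subtrees $T_m,T_{m-1},\dots,T_3$ from top to bottom, the bottom spine node taking $T_1$ and $T_2$ as its two children. Under this encoding each binary steal peels off exactly the rightmost remaining subtree, matching the $k$-ary steal rule. This introduces $m-2$ extra nodes per such node, so $|T'|\leq 2|T|$ rather than $|T'|=|T|$; the time bound $O(|T|n)$ and your height bound $h'=O(hk)$ survive unchanged. With the transformation corrected in this way, the rest of your argument goes through and coincides with the paper's proof.
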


\begin{proof}
We transform $T$ into a binary tree according to the discussion leading up to this theorem, and apply the algorithm described in Corollary \ref{cor:binaryalgo}. Even though the transformed tree can be larger than the original one, it is no more than twice as large. Indeed, transforming a node with $k>2$ children introduces $k-2$ extra nodes in the resulting binary tree. 

The transformation takes time of order $|T|$, and the algorithm takes time of order $|T|n$. Since the transformed tree has height no more than $kn$, the algorithm takes space $O(hkn)$.
\end{proof}

\subsection*{Complete $k$-ary Trees}

As in our analysis of binary trees, an interesting special case is the case of complete $k$-ary trees, i.e., full $k$-ary trees in which all leaves have the same depth and every parent has $k$ children. Moreover, we determine the answer for almost complete $k$-ary trees. Recall from Section \ref{sec:notation} that for $k\geq 2,h\geq 0$, and $1\leq b\leq k-1$, $\id{ACT}(b,k,h)$ denotes the almost complete $k$-ary tree with $b\cdot k^h$ leaves.

\begin{theorem}
\label{thm:completekaryformula}
For $k\geq 2,h\geq 0$, and $1\leq b\leq k-1$, we have
\begin{align}
\Phi(\id{ACT}(b,k,h),n)={}  & \left((k-1)\binom{h}{1}+(k-1)^2\binom{h}{2}+\cdots+(k-1)^{n}\binom{h}{n}\right) \notag \\
                                   &+ (b-1)\left(\binom{h}{0}+(k-1)\binom{h}{1}+\cdots+(k-1)^{n-1}\binom{h}{n-1}\right) \notag \\
                                 ={} & \sum_{i=1}^n (k-1)^i\binom{h}{i} + (b-1)\sum_{i=0}^{n-1}(k-1)^i\binom{h}{i}. \notag 
\end{align} 
\end{theorem}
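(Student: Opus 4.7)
Plan. I would reduce to the binary case via the left-child right-sibling (LCRS) transformation of the previous subsection, apply Theorem~\ref{thm:worstcaserecurrence}, and verify the formula by a nested induction on $h$ (outer) and $b$ (inner). Write $\phi(b,h,n) := \Phi(\id{ACT}(b,k,h),n)$. Under the LCRS equivalence, any steal from $\id{ACT}(b,k,h)$ amounts to detaching one root child (a copy of $\id{ACT}(1,k,h)$), leaving a tree with $b-1$ identical children (which collapses to the single remaining copy when $b=2$). Theorem~\ref{thm:worstcaserecurrence} applied to the LCRS form then yields
\begin{align*}
\phi(b,h,n) &= 1 + \max\{\phi(b-1,h,n-1)+\phi(1,h,n),\,\phi(1,h,n-1)+\phi(b-1,h,n)\} \quad (b\geq 3),\\
\phi(2,h,n) &= 1 + \phi(1,h,n-1)+\phi(1,h,n),\\
\phi(1,h,n) &= 1 + \phi(1,h-1,n-1)+\phi(k-1,h-1,n) \quad (h\geq 1),
\end{align*}
the last line obtained by applying the same detaching argument at the root of $\id{ACT}(1,k,h)$, which has $k$ identical children $\id{ACT}(1,k,h-1)$, and then resolving the max (``second branch wins'' for $k\geq 3$, or by symmetry when $k=2$).

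The base cases are $h=0$ (a single node for $b=1$ gives $\phi=0$; a root with $b$ leaves for $b\geq 2$ gives $\phi=b-1$ once $n\geq 1$, both matching the formula since $\binom{0}{i}=0$ for $i\geq 1$) and $n=0$ (trivial). For the inductive step at height $h\geq 1$, I would handle $b=1$ first. Substituting the inductive formulas for $\phi(1,h-1,n-1)$ and $\phi(k-1,h-1,n)$, absorbing the $+1$ as the $i=0$ term $(k-1)^0\binom{h-1}{0}$, shifting an index, and applying Pascal's identity $\binom{h-1}{i-1}+\binom{h-1}{i}=\binom{h}{i}$ collapses the recurrence to $\phi(1,h,n)=\sum_{i=1}^n(k-1)^i\binom{h}{i}$, matching the claim. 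With $\phi(1,h,\cdot)$ in hand I would then ascend $b=2,3,\ldots,k-1$: for $b\geq 3$, one first checks
\[
[\phi(1,h,n-1)+\phi(b-1,h,n)] - [\phi(b-1,h,n-1)+\phi(1,h,n)] = (b-2)(k-1)^{n-1}\binom{h}{n-1} \geq 0
\]
using the candidate formulas, so the second branch dominates and the recurrence becomes $\phi(b,h,n)=1+\phi(1,h,n-1)+\phi(b-1,h,n)$. Substituting the formulas and recombining the $(b-2)$- and $1$-weighted tail sums into a single $(b-1)$-weighted tail sum (with the same ``$1=\binom{h}{0}$'' absorption) yields the claim; the $b=2$ case is the same computation with the trivial ``$b-2=0$'' term.

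The main obstacle is the $b=1$ inductive step, which is where the $k$-arity of the tree gets encoded into the binomial coefficients and the height index shifts from $h-1$ to $h$. One must set up the LCRS recurrence at a root with $k$ children (so that the ``remainder'' tree has $k-1$ children, just inside the allowed range), resolve the max (or invoke symmetry for $k=2$), and execute a Pascal's-identity bookkeeping analogous to but heavier than that in Corollary~\ref{cor:binarytrees}. Once that key step is handled, the $b\geq 2$ induction reduces to a direct algebraic computation.
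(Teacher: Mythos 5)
Your proposal is correct and follows essentially the same route as the paper: reduce to the binary recurrence of Theorem~\ref{thm:worstcaserecurrence} via the left-child right-sibling view, resolve the max by the difference $(b-2)(k-1)^{n-1}\binom{h}{n-1}\geq 0$, and close the induction with Pascal's identity. The only difference is organizational---the paper folds your separate $b=1$ step into a preliminary ``consistency'' identity $\Phi(\id{ACT}(1,k,h+1),n)=\Phi(\id{ACT}(k,k,h),n)$, which lets it run a single induction on the leaf count $b\cdot k^h$ with $2\leq b\leq k$ instead of your nested induction on $h$ and $b$, but the underlying Pascal computation is identical.
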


\begin{proof}
We first show that the formula is consistent even if we allow the tree $\id{ACT}(1,k,h+1)$ to be written as $\id{ACT}(k,k,h)$ for $h\geq 0$. Indeed, we have 
\begin{align}
\Phi(\id{ACT}(1,k,h+1),n) ={} & (k-1)\binom{h+1}{1}+(k-1)^2\binom{h+1}{2}+\cdots+(k-1)^{n}\binom{h+1}{n}  \notag \\
                                      ={} & (k-1)\left(\binom{h}{0}+\binom{h}{1}\right)+(k-1)^2\left(\binom{h}{1}+\binom{h}{2}\right) \notag \\
                                      & +\cdots+(k-1)^n\left(\binom{h}{n-1}+\binom{h}{n}\right) \notag \\
                                      ={} & \left((k-1)\binom{h}{1}+(k-1)^2\binom{h}{2}+\cdots+(k-1)^n\binom{h}{n}\right) \notag \\
                                      & + (k-1)\left(\binom{h}{0}+(k-1)\binom{h}{1}+\cdots+(k-1)^{n-1}\binom{h}{n-1}\right)  \notag \\
                                      ={} & \Phi(\id{ACT}(k,k,h),n), \notag
\end{align} 
where we used Pascal's identity (Equation \ref{eq:pascal}).

We proceed to prove the formula by induction on $b\cdot k^h$. The case $b\cdot k^h=1$ holds, since both the left-hand side and the right-hand side are zero. The case $n=0$ holds similarly. Consider the tree $\id{ACT}(b,k,h)$, where $b\cdot k^h>1$ and $2\leq b\leq k$. Using the consistency of the formula that we proved above, it is safe to represent any nontrivial tree in such form. Now, the recurrence in Theorem \ref{thm:worstcaserecurrence} yields
\[
\Phi(\id{ACT}(b,k,h),n) =1 + \max\{A,B\},
\]
where $A=\Phi(\id{ACT}(b-1,k,h),n)+\Phi(\id{ACT}(1,k,h),n-1)$ and $B=\Phi(\id{ACT}(b-1,k,h),n-1)+\Phi(\id{ACT}(1,k,h),n)$.

Using the induction hypothesis, we have 
\begin{align}
A &= \left(\sum_{i=1}^n(k-1)^i\binom{h}{i}+(b-2)\sum_{i=0}^{n-1}(k-1)^i\binom{h}{i}\right)+\sum_{i=1}^{n-1}(k-1)^i\binom{h}{i} \notag \\
  &= \left(\sum_{i=1}^n(k-1)^i\binom{h}{i}+(b-1)\sum_{i=0}^{n-1}(k-1)^i\binom{h}{i}\right)-1 \notag 
\end{align} 
and
\begin{align*}
B &= \left(\sum_{i=1}^{n-1}(k-1)^i\binom{h}{i}+(b-2)\sum_{i=0}^{n-2}(k-1)^i\binom{h}{i}\right)+\sum_{i=1}^{n}(k-1)^i\binom{h}{i} \notag \\
  &= A - (b-2)(k-1)^{n-1}\binom{h}{n-1} \notag \\
  &\leq A. 
\end{align*} 
Therefore, we have
\begin{align*}
\Phi(\id{ACT}(b,k,h),n) &= 1+A \notag \\
                                  &= \sum_{i=1}^n(k-1)^i\dbinom{h}{i}+(b-1)\sum_{i=0}^{n-1}(k-1)^i\dbinom{h}{i}, \notag 
\end{align*} 
as desired.
\end{proof}

It is worth noting that the exponential factor $(k-1)^i$ becomes huge when $k>2$. Instead, if we were to convert a node with $k>2$ children into binary nodes forming a balanced binary tree, we would only have to replace the height $h$ of the tree by $h\log k$. It should not be surprising that a balanced binary tree yields a much better upper bound than a node with multiple children. Indeed, as we have shown, a node with $k>2$ children is equivalent to a left-child right-sibling binary tree, which is highly unbalanced. When the tree is highly unbalanced, it is possible that a large number of steals are generated if the execution of smaller subtrees always finishes before that of larger ones. When the tree is balanced, on the other hand, the execution is more likely to reach a point where every processor owns a relatively large tree, and the execution of one such tree has to finish before the next steal can be performed.

We have established the upper bound on the number of steals in the configuration with one processor having an arbitrary rooted tree at the beginning. In the next section, we generalize one step further by allowing any number of processors to own work at the beginning.

\section{Rooted Tree and Multiple Starting Processors}
\label{sec:manystartingprocs}

In this section, we consider a generalization of Section \ref{sec:karytrees} where the work is not limited to one processor at the beginning, but rather can be spread out as well. We derive a formula for computing the potential function of a configuration based on the potential function of the individual trees. This leads to an algorithm that computes the upper bound for the configuration with trees $T_1,T_2,\ldots,T_P$ in time $O(P^3+P(|T_1|+|T_2|+\cdots+|T_P|))$. We then show that for complete $k$-ary trees, we only need to sort the trees in order to compute the maximum number of steals. Since we can convert any rooted tree into a binary tree as described in Section \ref{sec:karytrees}, it suffices throughout this section to analyze the case in which all trees are binary trees.

\subsection*{Formula}

Suppose that in our configuration, the $P$ processors start with trees $T_1,T_2,\ldots,T_P$. How might a sequence of steals proceed? As in our previous analysis of the case with one starting processor, we have an option of being greedy. We pick one tree---say $T_1$---and obtain as many steals as possible out of it using one processor. After we are done with $T_1$, we pick another tree---say $T_2$---and obtain as many steals as possible out of it using two processors. We proceed in this way until we pick the last tree---say $T_P$---and obtain as many steals out of it using all $P$ processors.

We make the following definition.

\begin{definition}
Let $T_1,T_2,\ldots,T_n$ be binary trees. Then $\Phi(T_1,T_2,\ldots,T_n)$ is the maximum number of steals that we can get from a configuration of $n$ processors that start with the trees $T_1,T_2,\ldots,T_n$.
\end{definition}

Note that we are overloading the potential function operator $\Phi$. Unlike the previous definition of $\Phi$, this definition does not explicitly include the number of processors, because this number is simply the number of trees included in the argument of $\Phi$. It follows from the definition that $\Phi(T_1,T_2,\ldots,T_n)=\Phi(T_{\sigma(1)},T_{\sigma(2)},\ldots,T_{\sigma(n)})$ for all permutations $\sigma$ of $1,2,\ldots,n$.

From the discussion leading up to the definition, we have  
\[\Phi(T_1,T_2,\ldots,T_P)\geq \Phi(T_{\sigma(1)},0)+\Phi(T_{\sigma(2)},1)+\cdots+\Phi(T_{\sigma(P)},P-1)\]
for any permutation $\sigma$ of $1,2,\ldots,P$. It immediately follows that 
\[\Phi(T_1,T_2,\ldots,T_P)\geq \max_{\sigma\in S_P}\left(\Phi(T_{\sigma(1)},0)+\Phi(T_{\sigma(2)},1)+\cdots+\Phi(T_{\sigma(P)},P-1)\right),\]
where $S_P$ denotes the symmetric group of order $P$, i.e., the set of all permutations of $1,2,\ldots,P$.

The next theorem shows that this inequality is in fact an equality.

\begin{theorem}
Let $T_1,T_2,\ldots,T_P$ be binary trees. We have
\[\Phi(T_1,T_2,\ldots,T_P)= \max_{\sigma\in S_P}\left(\Phi(T_{\sigma(1)},0)+\Phi(T_{\sigma(2)},1)+\cdots+\Phi(T_{\sigma(P)},P-1)\right).\]
\end{theorem}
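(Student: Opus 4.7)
The plan is to establish the reverse inequality $\leq$, since the $\geq$ direction has already been shown. Consider an arbitrary stealing execution on the configuration $(T_1, T_2, \ldots, T_P)$. Throughout the execution, every subtree owned by any processor is a descendant of exactly one initial tree $T_i$; accordingly, assign each steal a \emph{type} $i$, namely the index of the initial tree to which the victim's subtree belongs. Let $s_i$ denote the number of type-$i$ steals, and let $f(i)$ denote the time index of the last type-$i$ steal (setting $f(i)=0$ if none exists). Choose a permutation $\sigma \in S_P$ so that $f(\sigma(1)) \leq f(\sigma(2)) \leq \cdots \leq f(\sigma(P))$.

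The key claim is that $s_{\sigma(j)} \leq \Phi(T_{\sigma(j)}, j-1)$ for every $j$; summing over $j$ and replacing $\sigma$ by the maximizer immediately yields the theorem. To prove the claim, I would let $\alpha(i,t)$ be the number of processors that own a nonempty descendant of $T_i$ right after the $t$-th operation. A short argument shows that once $\alpha(i,\cdot)$ reaches $0$ it remains $0$ forever: a processor can acquire a descendant of $T_i$ only by stealing from another processor that currently owns one. Hence, for any $k \geq j$ and any $t \leq f(\sigma(j)) \leq f(\sigma(k))$, one has $\alpha(\sigma(k),t) \geq 1$. Since each processor's tree descends from at most one original, $\sum_i \alpha(i,t) \leq P$, and combining the two bounds gives $\alpha(\sigma(j),t) \leq j$ throughout $[0,f(\sigma(j))]$. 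I would then replay the subsequence of type-$\sigma(j)$ steals as an execution of the reduced game with $j$ processors---one initially owning $T_{\sigma(j)}$ and the other $j-1$ initially empty---giving a valid run with $s_{\sigma(j)}$ steals, so $s_{\sigma(j)} \leq \Phi(T_{\sigma(j)}, j-1)$ by the definition of the potential.

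The main obstacle will be formalizing this replay. Real-game operations that drop a $T_{\sigma(j)}$-descendant without being a type-$\sigma(j)$ steal (namely, steals of other types whose stealer was in the current set of $T_{\sigma(j)}$-descendant owners) have no direct counterpart in the reduced game, since every reduced-game operation must itself be a steal. The resolution is that the reduced game does not need to mirror the real state exactly; it only needs an available processor at each step. Concretely, whenever a type-$\sigma(j)$ steal introduces a new owner of a $T_{\sigma(j)}$-descendant (so $\alpha(\sigma(j),\cdot)$ jumps up by one), the value of $\alpha(\sigma(j),\cdot)$ just before the steal is at most $j-1$, so at most $j-1$ of the $j$ reduced-game processors are currently committed to mirroring real owners; the remaining one can destroy whatever idle subtree it happens to hold and perform the new steal, maintaining the invariant that every subtree involved in a type-$\sigma(j)$ real steal is present on some reduced-game processor.
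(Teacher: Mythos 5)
Your proposal is correct and takes essentially the same route as the paper: order the trees by the time of their last steal, show that descendants of $T_{\sigma(j)}$ occupy at most $j$ processors for as long as type-$\sigma(j)$ steals are still to come, and replay (the paper says ``canonicalize'') the type-$\sigma(j)$ steals as an execution on $j$ processors, giving $s_{\sigma(j)}\leq\Phi(T_{\sigma(j)},j-1)$. You merely carry out the occupancy count and the replay step in more detail than the paper, which asserts the canonicalization without elaboration.
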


\begin{proof}
We have already shown that the left-hand side is no less than the right-hand side, hence it only remains to show the reverse inequality.

Suppose that we are given any sequence of steals performed on $T_1,T_2,\ldots,T_P$ using the $P$ processors. Each steal is performed on a subtree of one of the trees $T_1,T_2,\ldots,T_P$. 

Assume without loss of generality that the last steal performed on a subtree of $T_1$ occurs before the last steal performed on a subtree of $T_2$, which occurs before the last steal performed on a subtree of $T_3$, and so on. That means that at any particular point in the stealing sequence, subtrees of $T_i$ occupy at most $i$ processors, for all $1\leq i\leq P$. (Subtrees of $T_i$ may have occupied a total of more than $i$ processors at different points in the stealing sequence, but that does not matter.) We can canonicalize the sequence of steals in such a way that all steals on subtrees of $T_1$ are performed first using one processor, then all steals on subtrees of $T_2$ are performed using two processors, and so on, until all steals on subtrees of $T_P$ are performed using $P$ processors. Therefore, in this case the total number of steals is no more than $\Phi(T_1,0)+\Phi(T_2,1)+\cdots+\Phi(T_P,P-1)$.

In general, let $\sigma$ be the permutation of $1,2,\ldots,P$ such that the last steal performed on a subtree of $T_{\sigma(1)}$ occurs before the last steal performed on a subtree of $T_{\sigma(2)}$, which occurs before the last steal performed on a subtree of $T_{\sigma(3)}$, and so on. Then we have \[\Phi(T_1,T_2,\ldots,T_P)\leq  \Phi(T_{\sigma(1)},0)+\Phi(T_{\sigma(2)},1)+\cdots+\Phi(T_{\sigma(P)},P-1).\] Therefore, \[\Phi(T_1,T_2,\ldots,T_P)\leq \max_{\sigma\in S_P}\left(\Phi(T_{\sigma(1)},0)+\Phi(T_{\sigma(2)},1)+\cdots+\Phi(T_{\sigma(P)},P-1)\right),\] which gives us the desired equality.
\end{proof}

\subsection*{Algorithm}

Now that we have a formula to compute $\Phi(T_1,T_2,\ldots,T_P)$, we again ask how fast we can compute it. Recall from Section \ref{sec:notation} that $|T|$ denotes the size of the tree $T$.

\begin{corollary}
\label{cor:algmanyprocessors}
There exists an algorithm that computes the potential $\Phi(T_1,T_2,\ldots,T_P)$ in time $O(P^3+P(|T_1|+|T_2|+\cdots+|T_P|))$.
\end{corollary}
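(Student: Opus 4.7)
The plan is to compute $\Phi(T_1,\ldots,T_P)$ in two stages, corresponding to the two terms $P^3$ and $P(|T_1|+\cdots+|T_P|)$ in the claimed bound.

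First, I would build a table of values $\Phi(T_i,j)$ for every $1\le i\le P$ and every $0\le j\le P-1$. Corollary \ref{cor:binaryalgo} already provides a dynamic-programming algorithm whose subproblem table, when invoked on a tree $T_i$ with parameter $n=P-1$, contains exactly the column $\Phi(T_i,0),\Phi(T_i,1),\ldots,\Phi(T_i,P-1)$ as a by-product (the recurrence of Theorem \ref{thm:worstcaserecurrence} is filled in over all subtrees and all processor counts up to $n$). Running that algorithm once per $T_i$ costs $O(|T_i|\cdot P)$, for a total of $O(P(|T_1|+|T_2|+\cdots+|T_P|))$ across all $i$. The outputs are stored in a $P\times P$ matrix $M$ with $M_{i,j}=\Phi(T_i,j-1)$.

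Second, I would translate the formula from the preceding theorem into an assignment-problem instance on $M$. The formula states
\[\Phi(T_1,T_2,\ldots,T_P) \;=\; \max_{\sigma\in S_P}\,\sum_{i=1}^P M_{i,\sigma(i)},\]
which is the classical maximum-weight assignment problem. The Hungarian algorithm solves this in $O(P^3)$ arithmetic operations on the entries of $M$ (each entry is retrieved from the table in $O(1)$ time). Summing the two stages yields the overall running time $O(P^3+P(|T_1|+|T_2|+\cdots+|T_P|))$.

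There is no substantive obstacle here; the corollary is essentially a packaging exercise. The one detail worth flagging is that Corollary \ref{cor:binaryalgo} must be called with $n=P-1$ rather than with each $j$ separately, so that the full column $\Phi(T_i,\cdot)$ needed for stage two is produced in a single $O(|T_i|P)$-time pass rather than in $P$ separate runs of total cost $O(|T_i|P^2)$, which would blow the budget. After that, invoking any textbook $O(P^3)$ assignment solver on $M$ completes the proof.
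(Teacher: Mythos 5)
Your proof is correct and follows essentially the same route as the paper: precompute all values $\Phi(T_i,j)$ via the dynamic program of Corollary \ref{cor:binaryalgo} in time $O(P(|T_1|+\cdots+|T_P|))$, then solve the resulting maximum-weight assignment problem with an $O(P^3)$ Hungarian-method solver (the paper cites Tomizawa's algorithm). Your added remark about invoking the dynamic program once per tree with $n=P-1$ to harvest the whole column is a sensible clarification but not a departure from the paper's argument.
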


\begin{proof}
The potentials $\Phi(T_i,j)$ can be precomputed by using dynamic programming in time $O(P(|T_1|+|T_2|+\cdots+|T_P|))$ using the algorithm in Corollary \ref{cor:binaryalgo}. It then remains to determine the maximum value of $\Phi(T_{\sigma(1)},0)+\Phi(T_{\sigma(2)},1)+\cdots+\Phi(T_{\sigma(P)},P-1)$ over all permutations $\sigma$ of $1,2,\ldots,P$. A brute-force solution that tries all possible permutations $\sigma$ of $1,2,\ldots,n$ takes time $O(P!)$. Nevertheless, our maximization problem is an instance of the assignment problem, which can be solved using the classical ``Hungarian method''. The algorithm by Tomizawa \cite{Tomizawa72} solves the assignment problem in time $O(P^3)$. Hence, the total running time is $O(P^3+P(|T_1|+|T_2|+\cdots+|T_P|))$.
\end{proof}

\subsection*{Complete Trees}

It is interesting to ask whether we can do better than the algorithm in Corollary \ref{cor:algmanyprocessors} in certain special cases. Again, we consider the case of complete trees. In this subsection we assume that the $P$ processors start with almost complete $k$-ary trees (defined in Section \ref{sec:notation}) for the same value of $k$. 

Suppose that at the beginning of the execution of work stealing, the processors start with the trees $\id{ACT}(b_1,k,h_1),\id{ACT}(b_2,k,h_2),\ldots,\id{ACT}(b_P,k,h_P)$, where $1\leq b_1,b_2,\ldots,b_P\leq k-1$. We may assume without loss of generality that $b_1\cdot k^{h_1}\leq b_2\cdot k^{h_2}\leq\cdots\leq b_P\cdot k^{h_P}$. Intuitively, in order to generate the maximum number of successful steals, one might want to allow larger trees more processors to work with, because larger trees can generate more steals than smaller trees. It turns out that in the case of complete trees, this intuition always works, as is shown in the following theorem.

\begin{theorem}
\label{thm:completekarymany}
Let $b_1\cdot k^{h_1}\leq b_2\cdot k^{h_2}\leq\cdots\leq b_P\cdot k^{h_P}$. We have
\begin{align*}
 \Phi(\id{ACT}(b_1,k,h_1),\id{ACT}(b_2,k,h_2),\ldots,\id{ACT}(b_P,k,h_P)) = \sum_{i=1}^P\Phi(\id{ACT}(b_i,k,h_i),i-1).
\end{align*}
\end{theorem}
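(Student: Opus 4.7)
The preceding theorem in the section gives
\[\Phi(T_1,\ldots,T_P) = \max_{\sigma \in S_P} \sum_{i=1}^{P} \Phi(T_{\sigma(i)}, i-1),\]
so my task is to show that the identity permutation attains this maximum when the trees are sorted so that $b_1 k^{h_1} \le \cdots \le b_P k^{h_P}$. I plan to use a standard exchange (bubble-sort) argument. If $\sigma \ne \mathrm{id}$, then some adjacent pair of positions $i < i+1$ has $\sigma(i) > \sigma(i+1)$, meaning a bigger tree has fewer processors assigned to it than a smaller tree. I will show that swapping this pair does not decrease the sum. Since each such swap strictly decreases the number of inversions of $\sigma$, iterating sorts $\sigma$ into the identity without ever decreasing the sum.

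After cancellation, the swap inequality is equivalent to saying that the marginal gain from adding one more processor is non-decreasing in tree size. Telescoping yields the following one-step claim to verify: for any two almost complete $k$-ary trees $T = \id{ACT}(b,k,h)$ and $T' = \id{ACT}(b',k,h')$ with $b k^h \le b' k^{h'}$ and $1 \le b,b' \le k-1$, and any integer $m \ge 0$,
\[\Phi(T', m+1) - \Phi(T', m) \ge \Phi(T, m+1) - \Phi(T, m).\]

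By Theorem \ref{thm:completekaryformula}, both sides evaluate to closed forms of the shape $(k-1)^{m+1}\binom{\cdot}{m+1} + (\cdot - 1)(k-1)^m \binom{\cdot}{m}$. The hypothesis $b k^h \le b' k^{h'}$, combined with $1 \le b, b' \le k-1$, forces either (i) $h = h'$ and $b \le b'$, or (ii) $h < h'$ (since $bk^h \le (k-1)k^h < k^{h+1}$ already suffices once $h' \ge h+1$). Case (i) gives the inequality term by term. For case (ii), $h' \ge h+1$ together with monotonicity of binomial coefficients and Pascal's identity gives $\binom{h'}{m+1} \ge \binom{h}{m+1} + \binom{h}{m}$; the resulting surplus $(k-1)\binom{h}{m}$ on the left absorbs the discrepancy between the $(b-1)$ and $(b'-1)$ coefficients on the $\binom{h}{m}$ term, using the bound $b-1 \le k-1$.

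\textbf{Main obstacle.} Case (ii) is the delicate one: the marginal formulas for $T$ and $T'$ have different top arguments in the binomial coefficients, so termwise comparison fails. Pascal's identity generates exactly the extra $\binom{h}{m}$ slack needed, and the normalization $b \le k-1$ on the root branching factor is essential to ensure that slack dominates. Once this single-step marginal inequality is in hand, the rearrangement/bubble-sort framework is routine.
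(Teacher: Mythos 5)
Your proposal is correct and follows essentially the same route as the paper: an adjacent-transposition exchange argument on the permutation in the $\max$ formula, reduced via the closed form of Theorem \ref{thm:completekaryformula} to comparing the marginal terms $(k-1)^{m+1}\binom{h}{m+1}+(b-1)(k-1)^m\binom{h}{m}$, with the same two cases ($h=h'$ versus $h<h'$) and the same use of Pascal's identity together with the bound $b\leq k-1$ in the unequal-height case. Your weak ("does not decrease") version of the swap inequality is in fact slightly more careful than the paper's claim of strict increase, which can fail in degenerate cases where the relevant binomial coefficients vanish, but the theorem only needs the weak form.
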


\begin{proof}
We use the exchange argument: if the trees are not already ordered in increasing size, then there exist two consecutive positions $j$ and $j+1$ such that $b_j\cdot k^{h_j}>b_{j+1}\cdot k^{h_{j+1}}$. We show that we may exchange the positions of the two trees and increase the total potential in the process. Since we can always perform a finite number of exchanges to obtain the increasing order, and we know that the total potential increases with each exchange, we conclude that the maximum potential is obtained exactly when the trees are ordered in increasing size.

It only remains to show that any exchange of two trees $b_j\cdot k^{h_j}$ and $b_{j+1}\cdot k^{h_{j+1}}$ such that $b_j\cdot k^{h_j}>b_{j+1}\cdot k^{h_{j+1}}$ increases the potential. Denote the new potential after the exchange by $N$ and the old potential before the exchange by $O$. We would like to show that $N>O$. We have
\begin{align*}
N-O ={} & \left(\sum_{i=1}^j(k-1)^i\binom{h_j}{i}+(b_j-1)\sum_{i=0}^{j-1}(k-1)^i\binom{h_j}{i}\right)  \\
    &+    \left(\sum_{i=1}^{j-1}(k-1)^i\binom{h_{j+1}}{i}+(b_{j+1}-1)\sum_{i=0}^{j-2}(k-1)^i\binom{h_{j+1}}{i}\right) \\
    &-    \left(\sum_{i=1}^{j-1}(k-1)^i\binom{h_j}{i}+(b_j-1)\sum_{i=0}^{j-2}(k-1)^i\binom{h_j}{i}\right)  \\
    &-    \left(\sum_{i=1}^{j}(k-1)^i\binom{h_{j+1}}{i}+(b_{j+1}-1)\sum_{i=0}^{j-1}(k-1)^i\binom{h_{j+1}}{i}\right) \\
    ={} & (k-1)^j\binom{h_j}{j}+(b_j-1)(k-1)^{j-1}\binom{h_j}{j-1}  \\   
    &-    (k-1)^j\binom{h_{j+1}}{j}-(b_{j+1}-1)(k-1)^{j-1}\binom{h_{j+1}}{j-1}.
\end{align*} 

We consider two cases.

\underline{Case 1}: $h_j=h_{j+1}$ and $b_j>b_{j+1}$.

We have 
\begin{align*}
N-O ={} & (k-1)^j\binom{h_j}{j}+(b_j-1)(k-1)^{j-1}\binom{h_j}{j-1}  \\   
    &-    (k-1)^j\binom{h_j}{j}-(b_{j+1}-1)(k-1)^{j-1}\binom{h_j}{j-1} \\
    ={} & (b_j-1)(k-1)^{j-1}\binom{h_j}{j-1} - (b_{j+1}-1)(k-1)^{j-1}\binom{h_j}{j-1} \\
    ={} & ((b_j-1)-(b_{j+1}-1))(k-1)^{j-1}\binom{h_j}{j-1} \\
    ={} & (b_j-b_{j+1})(k-1)^{j-1}\binom{h_j}{j-1} \\
    >{} & 0,
\end{align*}

since $b_j>b_{j+1}$.

\underline{Case 2}: $h_j>h_{j+1}$.

We have 
\begin{align*}
N-O ={} & (k-1)^j\binom{h_j}{j}+(b_j-1)(k-1)^{j-1}\binom{h_j}{j-1}  \\ 
    &-    (k-1)^j\binom{h_{j+1}}{j}-(b_{j+1}-1)(k-1)^{j-1}\binom{h_{j+1}}{j-1} \\
    \geq{} & (k-1)^j\binom{h_j}{j} - (k-1)^j\binom{h_{j+1}}{j}-(b_{j+1}-1)(k-1)^{j-1}\binom{h_{j+1}}{j-1} \\
    \geq{} & (k-1)^j\binom{h_j}{j} - (k-1)^j\binom{h_{j+1}}{j}-(k-1)(k-1)^{j-1}\binom{h_{j+1}}{j-1} \\
    ={} & (k-1)^j\binom{h_j}{j} - (k-1)^j\left(\binom{h_{j+1}}{j}+\binom{h_{j+1}}{j-1}\right) \\
    ={} & (k-1)^j\binom{h_j}{j} - (k-1)^j\binom{h_{j+1}+1}{j} \\
    ={} & (k-1)^j\left(\binom{h_j}{j} -\binom{h_{j+1}+1}{j}\right) \\
    >{} & 0,
\end{align*}
since $h_j\geq h_{j+1}+1$.

In both cases, we have $N-O>0$, and hence any exchange increases the potential, as desired.
\end{proof}

It follows from Theorem \ref{thm:completekarymany} and Corollary \ref{cor:algmanyprocessors} that in the case of complete $k$-ary trees, one only needs to sort the trees with respect to their size in order to compute the potential. It follows that the running time of the algorithm is bounded by the running time of sorting, which is $O(P\lg P)$.

\section{Conclusion and Future Work}
\label{sec:conclusion}

In this paper, we have established tight upper bounds on the number of steals when the computation can be modeled by rooted trees. Here we suggest two possible directions for future work:

\begin{itemize}
\item This paper restricts the attention to computation trees. In general, however, computations need not be trees. How do the upper bounds generalize to the case where the computation can be modeled by directed acyclic graphs (DAG)?
\item Consider a model in which executing each task requires a constant amount of time independent of the size of the computation trees. In this model, if no processor steals from a particular processor for a long enough period, that processor will finish executing its own tasks and can no longer be stolen from. Can we achieve tighter bounds in this model?
\end{itemize}

\section{Note}
\label{sec:acknowledgements}

The final publication is available at Springer via \begin{center} \url{http://dx.doi.org/10.1007/s00224-015-9613-9}. \end{center}

%%%%%%%%%%%%%%%%%%%%%%%%%%%%%%%%%%%%%%%%%%%%%%%%%%%%%%%
% \bibliographystyle{plain} 
% \bibliography{myBibFile} 
% If you use BibTeX to create a bibliography
% then copy and past the contents of your .bbl file into your .tex file

%\section*{Appendix: Springer-Author Discount}

%LNCS authors are entitled to a 33.3\% discount off all Springer
%publications. Before placing an order, the author should send an email, 
%giving full details of his or her Springer publication,
%to \url{orders-HD-individuals@springer.com} to obtain a so-called token. This token is a
%number, which must be entered when placing an order via the Internet, in
%order to obtain the discount.

\end{document}